\title{\LARGE \bf
\NAME~for\\ Uncertain Nonlinear Control Systems
}
\author{Lukas Brunke, Siqi Zhou, and Angela P. Schoellig% <-this % stops a space
% \thanks{This work was not supported by any organization}% <-this % stops a space
% \thanks{H. Kwakernaak is with Faculty of Electrical Engineering, Mathematics and Computer Science,
%         University of Twente, 7500 AE Enschede, The Netherlands
%         {\tt\small h.kwakernaak@autsubmit.com}}%
% \thanks{P. Misra is with the Department of Electrical Engineering, Wright State University,
%         Dayton, OH 45435, USA
%         {\tt\small pmisra@cs.wright.edu}}%
% }
\thanks{The authors are with the Dynamic Systems Lab
	(http://www.dynsyslab.org), Institute for Aerospace Studies,
	University of Toronto, Canada. The authors are also affiliated with
	the University of Toronto Robotics Institute and the Vector Institute for Artificial Intelligence, Toronto. Angela P. Schoellig is also with the Department of Electrical and Computer
Engineering, Technical University of Munich, Germany. Emails:
	\{lukas.brunke, siqi.zhou, angela.schoellig\}@robotics.utias.utoronto.ca}%
}
\newcommand{\acronym}{RPOF-SF}
\newcommand{\NAME}{Robust Predictive Output-Feedback Safety Filter}
\newcommand{\NAMECAPS}{ROBUST PREDICTIVE OUTPUT-FEEDBACK SAFETY FILTER}
\newcommand{\R}{\mathbb{R}}
\newcommand{\set}[1]{\mathbb{#1}}
\newtheorem{theorem}{Theorem}
\newtheorem{assumption}{Assumption}
\begin{document}

\maketitle
\thispagestyle{empty}
\pagestyle{empty}

%%%%%%%%%%%%%%%%%%%%%%%%%%%%%%%%%%%%%%%%%%%%%%%%%%%%%%%%%%%%%%%%%%%%%%%%%%%%%%%%
\begin{abstract}
In real-world applications, we often require reliable decision making under dynamics uncertainties using noisy high-dimensional sensory data. Recently, we have seen an increasing number of learning-based control algorithms developed to address the challenge of decision making under dynamics uncertainties. These algorithms  often make assumptions about the underlying unknown dynamics and, as a result,  can provide safety guarantees. This is more challenging for other widely used learning-based decision making algorithms such as reinforcement learning. Furthermore, the  majority of existing approaches assume access to  state measurements, which can be restrictive in practice. In this paper, inspired by the literature on safety filters and robust output-feedback control, we present a robust predictive output-feedback safety filter (RPOF-SF) framework that provides safety certification to an arbitrary controller applied to an uncertain nonlinear control system. The proposed RPOF-SF combines a robustly stable observer that estimates the system state from noisy measurement data and a predictive safety filter that  renders an arbitrary controller safe by (possibly) minimally modifying the controller input to guarantee safety. We show in theory that the proposed RPOF-SF guarantees constraint satisfaction despite disturbances applied to the system. We demonstrate the efficacy of the proposed RPOF-SF~algorithm using an uncertain mass-spring-damper system. 
\end{abstract}

%%%%%%%%%%%%%%%%%%%%%%%%%%%%%%%%%%%%%%%%%%%%%%%%%%%%%%%%%%%%%%%%%%%%%%%%%%%%%%%%
\section{INTRODUCTION}
In many practical settings including autonomous driving and other robotics applications, only noisy and high-dimensional measurements are available  and controllers must be designed that provide desired safety guarantees despite not having perfect state information. However, many  control techniques assume access to the full state and an exact description of the system dynamics and observation model, which may not be available. 

Recently, learning-based controllers have gained interest in such settings as they can synthesize control policies from high-dimensional measurements. Learning-based controllers such as reinforcement learning~(RL) agents can improve performance by leveraging past experiences from the interaction with an environment, e.g., by operating in the real world.
These experiences generally also include failures, which could result in damage  to the robot or 
%in safety violations 
of its surrounding. 

As learning-based controllers often do not account for safety constraints, add-on safety filters have been proposed to decouple performance~(learning-based controller) and safety~(safety filter). These safety filters certify the safety of control inputs from learning-based controllers and modify them if the safety filter determines the original control input would lead to safety violations. 
Current safety filter methods typically assume noisy state feedback. However, in practice, we must often rely on partial and noisy output measurements instead. 
If a safety filter does not account for measurement noise and/or state estimation errors present in the system, then the safety filter can generally not guarantee safety~\cite{DSL2021}. 
\begin{figure}[t]
    \centering
    \includegraphics[width=\columnwidth]{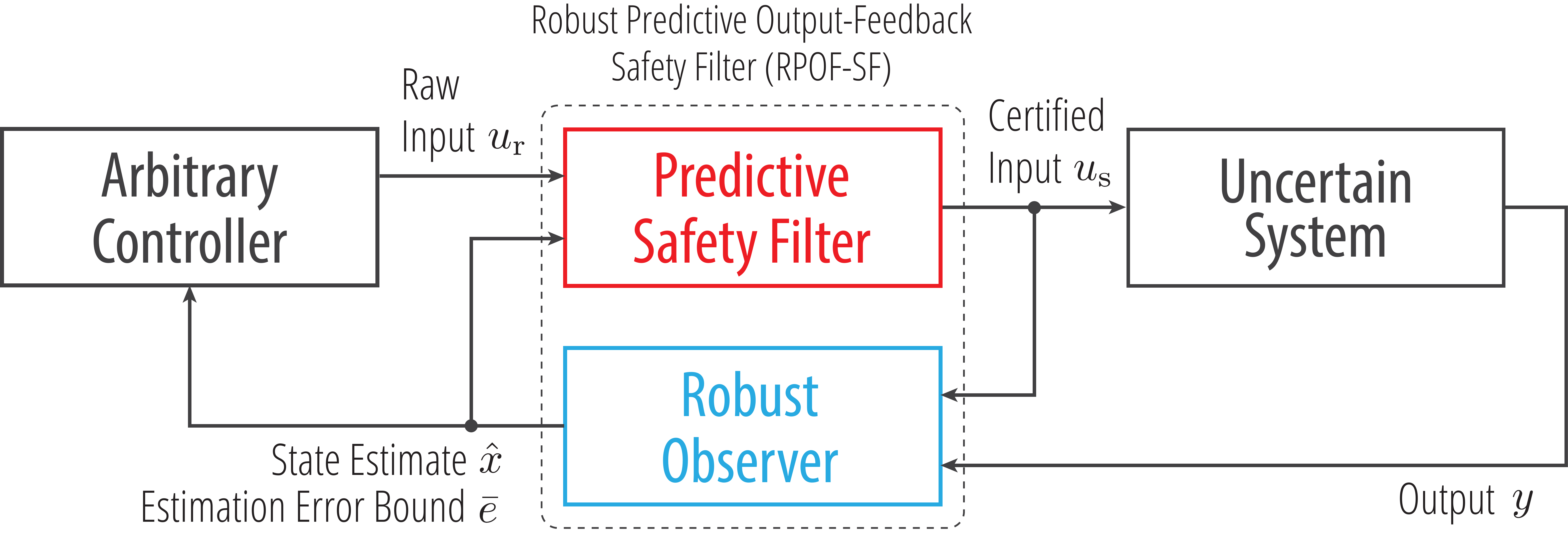}
    \vspace{-1em}
    \caption{A block diagram of the proposed robust predictive output-feedback safety filter (\acronym). The proposed \acronym~consists of a robust observer (blue) and a predictive safety filter module (red). The robust observer provides an estimate of the system state from noisy output measurements, while the predictive safety filter modifies the input sent from an arbitrary controller with the goal to guarantee constraint satisfaction (i.e., to make it safe).}
    \label{fig:blockdiagram}
\end{figure}

% In this paper, we present a robust predictive output-feedback  safety filter~(\acronym) that includes a predictive safety filter and a robustly stable observer (see~\autoref{fig:blockdiagram}).
In this paper, building on the idea of robust output feedback model predictive control~(MPC)~\cite{Kohler2021-ROFMPC} and model predictive safety certification~(MPSC)~\cite{Wabersich2021}, we derive a robust predictive output-feedback safety filter~(\acronym) that includes a predictive safety filter and a robustly stable observer~(see~\autoref{fig:blockdiagram}).
The proposed approach allows an uncertain system to operate safely under arbitrary control policies.  
We validate the proposed \acronym~on a numerical example\footnote{The code is available at: https://github.com/utiasDSL/dsl\_\_rpof\_\_sf}.
 
\section{Related Literature}
 We summarize related literature on safety filters and robust output-feedback control in this section.
\subsection{Safety Filters}
Learning-based controllers such as reinforcement learning have been proposed to address the challenge of decision making under uncertainties. As compared to traditional model-based control techniques, learning-based approaches can be applied to a wider class of systems subject to model uncertainties~\cite{DSL2021}. Despite their flexibility, learning-based controllers often lack formal safety guarantees. In recent years, there have been several efforts from the control community targeted to address the problem of providing desired theoretical guarantees for learning-based controllers that are \textit{not} initially designed to be safe. One stream of work in this area are safety filters. 
%Such controllers can achieve safe closed-loop operation through safety filters. 
A safety filter minimally modifies any unsafe control inputs from the learning-based controller such that the system's state stays inside a safe set~\cite{DSL2021}, which is either explicitly or implicitly defined. 
%In the literature, safety filtering is also referred to as shielding~\cite{Alshiekh2018}. 
%Generally, safety filtering methods aim to keep a system inside a safe set, which is either explicitly or implicitly defined, regardless of the structure  or the assumptions made in the design of the learning-based controller. 

Common techniques for explicitly defining safe sets include control barrier functions~(CBF)~\cite{ames2019a, taylor2020b, l4dc22} and Hamilton-Jacobi (HJ) reachability~\cite{mitchell2005time, Fisac2019a}, which are generally defined for continuous-time systems. Intuitively, the CBF framework provides a scalar condition for certifying control inputs to guarantee state constraint  satisfaction, while the HJ reachability framework provides a means to compute a robust positive control invariant safe set, which is a set that is contained in a given state constraint set such that if the uncertain system starts inside the set, there exists a control law to keep the system inside the set despite disturbances. The explicit safe set characterizations from the CBF framework and the HJ reachability analysis provide the basis for designing safety filters to render a learning-based control system safe. 
Recently, CBF and HJ reachability are combined with learning to reduce the conservatism in the safety filter design for uncertain systems~\cite{taylor2020b,l4dc22,Fisac2019a, 6225136}.  
%While~\cite{dean21, cosner2021} concern CBFs for output feedback systems, [to write up] %Typically, applying the safety filtering techniques to high-dimensional systems can be challenging. To this end, \cite{9561561} the HJ reachability safety analysis approach has been improved in computational speed but disturbances may require the recalculation of safe sets. 

%control barrier functions (explicit safe set, mostly continuous-time systems~\cite{ames2019a} (although discrete-time formulations exist, measurement robust CBF), learning for CBF~\cite{taylor2020b, l4dc22}, continuous-time formulations usually do not account for the time it takes to implement this in practice. We cannot solve a QP infinitely fast), 

An alternative to CBF and HJ reachability frameworks for safety certification is through the use of predictive filters, which do not assume a pre-computed safe set but determine the safe set implicitly via a MPC framework. A common safety filter that implements this idea is MPSC, which solves a finite-horizon constrained optimization problem with a discrete-time predictive model to prevent a learning-based controller from violating constraints~\cite{Wabersich2021}. These predictive filters typically require state-feedback and do not handle partial state measurements. 
%
% [to write up] predictive safety filters~\cite{Wabersich2021} (implicit safe set, for linear, linear adaptive, and nonlinear systems), computes backup plans. The predictive filter in xx is designed based on state-feedback of the system. %In our work, we extend this idea by considering noisy measurements. 
To the best of our knowledge, a predictive safety filter for output-feedback systems has not been proposed.

\subsection{Robust Output-Feedback Control} 
Due to their wide applicability and popularity, we focus on MPC techniques in this section. At every time step, MPC solves a finite-horizon optimal control problem subject to constraints, applies the first optimal control input, and replans at the next time step~\cite{Rawlings2017}. 
As MPC relies on its predictive model for optimal performance, disturbances acting on the system can lead to loss of feasibility of the optimization or constraint violation at the next time step. 
Under the assumption of full-state measurements, recent nonlinear robust tube-based MPC schemes guarantee constraint satisfaction and feasibility for uncertain systems~\cite{Kohler-robust-mpc, SinghLandryEtAl2019, Bayer2013}. These schemes rely on a pre-stabilizing controller to stay close to a nominal trajectory and constraint tightening, such that no disturbance can lead to constraint violations. 
For linear systems with partial and noisy state measurements, there exist robust output-feedback MPC methods~\cite{Mayne2006, Kogel2017, Lorenzetti2020, cdc21}. These approaches combine a robust MPC with a Luenberger observer and apply additional constraint tightening based on bounds on the estimation error.
Recent work in~\cite{Kohler2021-ROFMPC} extends the nonlinear robust tube-based MPC scheme to the output-feedback setting. This approach designs a robustly stable observer that predicts verifiable error bounds on the estimation error. Additionally accounting for these bounds in the pre-stabilizing controller and constraint tightening yields constraint satisfaction for all future time steps. 

There also exist min-max robust MPC formulations that simultaneously optimize for the optimal state estimate and control input sequence~\cite{Copp2017, Lofberg2002}. While~\cite{Lofberg2002} is only applicable to linear systems,~\cite{Copp2017} handles nonlinear systems. However, these techniques typically have a high computational demand. This puts min-max methods at a disadvantage compared to tube-based methods for systems with complex dynamics requiring high control rates.  

% What other techniques? Melissa's discrete-time flatness (although not robust to constraints?)? 

In this paper, we combine  model predictive safety certification~\cite{Wabersich2021} with a recent framework for robust output-feedback MPC~\cite{Kohler2021-ROFMPC} to guarantee safe closed-loop operation for arbitrary control policies, which includes potentially unsafe control policies of a reinforcement learning agent.  

% \begin{table}
% \caption{An Example of a Table}
% \label{table_example}
% \begin{center}
% \begin{tabular}{|c||c|}
% \hline
% One & Two\\
% \hline
% Three & Four\\
% \hline
% \end{tabular}
% \end{center}
% \end{table}

%%%%%%%%%%%%%%%%%%%%%%%%%%%%%%%%%%%%%%%%%%%%%%%%%%%%%%%%%%%%%%%%%%%%%%%%%%%%%%%%

\subsubsection*{\textbf{Notation}} 
The non-negative real numbers are $\R_{\geq 0}$. 
The set of integers in the interval $\left[a, b\right] \subset \R$ is~$\set{I}_{\left[a, b\right]}$, and the set of integers in the interval $\left[a, \infty \right) \subset \R$ is~$\set{I}_{\geq a}$. 
The class $\mathcal{K}$ denotes continuous functions $\gamma : \left[0, a\right) \to \R_{\geq 0}$ with $a > 0$,  $\gamma(0) = 0$, and $\gamma$ strictly monotonically increasing. The class $\mathcal{K}_\infty$ denotes functions $\gamma \in \mathcal{K}$ with $a = \infty$ that satisfy $\lim_{r \to \infty} \gamma(r) = \infty$.  

\section{PROBLEM STATEMENT}
This paper is concerned with safely controlling an uncertain nonlinear constrained system with multiple inputs and multiple outputs, where not all states are measured and there is measurement noise. 
We consider the following uncertain nonlinear discrete-time system 
\begin{align}
\label{eq:system}
    \begin{split}
        x_{k + 1} &= f_w(x_k, u_k, w_k) = f(x_k, u_k) + E w_k \\
        y_k &= h_w(x_k, u_k, w_k) = h(x_k, u_k) + F w_k \,,
    \end{split}
\end{align}
where $k \in \set{I}_{\geq 0}$ is the the discrete time step, $x_k \in \set{X} \subseteq \R^{n_x}$ is the state, $u_k \in \set{U} \subseteq \R^m$ is the control input, $w_k \in \set{W} \subset \R^q$ is the disturbance with $\set{W}$ being a bounded disturbance set containing zero, and $y_k \in \set{Y} \subseteq \R^{n_y}$ is the noisy output. As $\set{W}$ is bounded, there exists a scalar bound on the disturbance $\bar{w} \geq 0$ such that $\lVert w_k \rVert \leq \bar{w}$ for all $k \in \set{I}_{\geq 0}$. The state dynamics $f_w : \set{X} \times \set{U} \times \set{W} \to \R^{n_x}$ and output equation $h_w : \set{X} \times \set{U} \times \set{W} \to \R^{n_y}$ are continuous functions. The nominal dynamics and output equations are given by $f(x, u) = f_w(x, u, 0)$ and $h(x, u) = h_w(x, u, 0)$, respectively. The matrices $E \in \R^{n_x \times q}$ and $F \in \R^{n_y \times q}$ map disturbances to states and outputs, respectively.
%with the current time step $k \in \set{I}_{\geq 0}$, state $x_k \in \set{X} \subseteq \R^{n_x}$, control input $u_k \in \set{U} \subseteq \R^m$, and bounded disturbance and noise $w_k \in \set{W} \subset \R^q$ containing zero, and noisy outputs $y_k \in \set{Y} \subseteq \R^{n_y}$. The state dynamics $f_w : \set{X} \times \set{U} \times \set{W} \to \set{X}$ and output equation $h_w : \set{X} \times \set{U} \times \set{W}$ are continuous. The nominal dynamics and output equations are given by $f(x, u) = f_w(x, u, 0)$ and $h(x, u) = h_w(x, u, 0)$, respectively. The matrices $E \in \R^{n_x \times q}$ and $F \in \R^{n_y \times q}$ map disturbances to states and outputs, respectively. 
% Note that the disturbance $w_k$ includes the case of independent process and measurement noise acting on the state dynamics and the outputs. 
% Note that independent process and measurement noise can be modelled by $E$ and $F$ mapping each element of $w$ to either the dynamics or the output.  

The goal of this work is to design a robust predictive output-feedback safety filter that guarantees safe operation of the system under arbitrary and potentially unsafe control policies. Safe operation of the system is defined by staying inside user-defined state and input constraints $(x_k, u_k) \in \set{Z}$ for all time steps $k~\in~\set{I}_{\geq 0}$. In this paper, we leverage techniques from a recent robust output-feedback MPC framework~\cite{Kohler2021-ROFMPC} to develop a model predictive safety certification scheme for uncertain nonlinear systems without full-state feedback and noisy measurements.  

%%%%%%%%%%%%%%%%%%%%%%%%%%%%%%%%%%%%%%%%%%%%%%%%%%%%%%%%%%%%%%%%%%%%%%%%%%%%%%%%
\section{BACKGROUND}
In this section, we introduce the necessary background for our proposed robust predictive output-feedback safety filter, which we will refer to as the \acronym. 
First, we introduce a recent framework for robust output-feedback MPC~\cite{Kohler2021-ROFMPC}. This framework leverages incremental Lyapunov functions, which provide scalar propagation laws for over-approximations of the estimation and prediction errors. These propagation laws allow the design of robustly stable observers and can be efficiently evaluated in the MPC optimization. Second, we present a predictive safety filter inspired by MPC as in~\cite{Wabersich2021}. We leverage these ideas in our \acronym~to guarantee safe operation of dynamics systems with constraint satisfaction under bounded uncertainties in the dynamics and output. 

\subsection{Robust Output-Feedback MPC}
We first introduce recent results on robust output-feedback MPC for nonlinear uncertain systems using online estimation error bounds~\cite{Kohler2021-ROFMPC}.

\subsubsection*{\textbf{Incremental Lyapunov Functions}}
Incremental Lyapunov functions provide a way of analyzing the stability between two trajectories from the same system with the same control input but different initial conditions. Intuitively, incremental stability between two trajectories means that both trajectories converge to the same trajectory. 
% In the context of robust output-feedback MPC the incremental Lyapunov functions provide scalar propagation laws for over-approximations of the estimation and prediction errors which can be efficiently evaluated in an MPC optimization. 
The notion of incremental stability can be extended to \textit{(i)}~incremental input/output-to-state stability~(i-IOSS) by considering systems with inputs and outputs or \textit{(ii)} incremental input-to-state stabilizability by considering a system with an additional feedback policy~\cite{Kohler2021-ROFMPC}.
We begin with the definitions of detectability and stabilizability in terms of incremental Lyapunov functions.

Detectability for nonlinear systems can be described by an i-IOSS Lyapunov function~\cite{allan2021}. 
As we will show below, the i-IOSS Lyapunov function provides us with the tool to synthesize robustly stable observers with guaranteed estimation error bounds.
\definition[i-IOSS Lyapunov function~\cite{allan2021}]{
\label{def:i-ioss-lyap}
A function $V_\mathrm{d} : \set{X} \times \set{X} \to \R_{\geq 0}$~(with subscript~d for detectability) is called an (exponential-decrease) i-IOSS Lyapunov function if there exist lower and upper bounds on $V_{\mathrm{d}}$ with $\alpha_{\mathrm{d, l}}, \alpha_{\mathrm{d, u}} \in \mathcal{K}_{\infty},$ bounding functions $ \sigma_{
\mathrm{d}, w}, \sigma_{\mathrm{d}, y} \in \mathcal{K}$, and a decay rate $\rho_{\mathrm{d}} \in \left(0, 1\right)$ such that
\begin{subequations}
\label{eq:i-ioss-lyap}
    \begin{gather}
    \alpha_{\mathrm{d, l}}(\lVert x - \tilde{x} \rVert) \leq V_\mathrm{d}(x, \tilde{x}) \leq \alpha_{\mathrm{d, u}}(\lVert x - \tilde{x} \rVert) \\
    \begin{gathered}
    V_\mathrm{d}(x^+, \tilde{x}^+) \leq \rho_{\mathrm{d}} V_\mathrm{d}(x, \tilde{x}) + 
    \sigma_{\mathrm{d}, w}(\lVert w - \tilde{w} \rVert) + \\
    \sigma_{\mathrm{d}, y}(\lVert y - \tilde{y} \rVert), \label{eq:Vo_exp_cond}
    % V_\mathrm{d}(f_w(x, u, w), f_w(\tilde{x}, u, \tilde{w})) \leq \rho_{\mathrm{d}} V_\mathrm{d}(x, \tilde{x}) + \\
    % \sigma_{w, \mathrm{d}}(\lVert w - \tilde{w} \rVert) + \sigma_{y, \mathrm{d}}(\lVert h_w(x, u, w) - h_w(\tilde{x}, u, \tilde{w}) \rVert)
    \end{gathered}
    \end{gather}
\end{subequations}
where $x^+{=}f_w(x, u, w)$, $\tilde{x}^+ {=} f_w(\tilde{x}, u, \tilde{w})$, $y {=} h_w(x, u, w)$, $\tilde{y}{=} h_w(\tilde{x}, u, \tilde{w})$ for all $x, \tilde{x} \in \set{X}$, $u \in \set{U}$, and $w, \tilde{w} \in \set{W}$.
}

Intuitively, such  an i-IOSS Lyapunov function defines the convergence of two trajectories $x$ and $\tilde{x}$ under the same control inputs $u$ and the respective disturbances $w$ and $\tilde{w}$. In~\autoref{eq:Vo_exp_cond}, the decay rate~$\rho_{\mathrm{d}}$ defines how quickly the two trajectories converge, while $\sigma_{\mathrm{d}, w}$ and $\sigma_{\mathrm{d}, y}$ bound the effect of the disturbances and output, respectively. Given this notion of stability, we can then bound a trajectory $\{x_i\}_{i = 0}^H$ of length $H \in \set{N}$ using the knowledge of another trajectory $\{ \tilde{x}_i \}_{i = 0}^H$.
% $x$ using the knowledge of another trajectory $\tilde{x}$.
%

Incremental input-to-state stability~(i-ISS) studies the stability of trajectories in the presence of disturbances $w$~\cite{Bayer2013}. 
The notion of incremental input-to-state stability~(i-ISS) can be extended to incremental input-to-state stabilizability by considering a control Lyapunov function~(CLF). As we will see next, we can conveniently use this condition in a robust MPC framework to determine constraint tightening and thereby provide constraint satisfaction guarantees.

\definition[i-ISS CLF \cite{Kohler2021-ROFMPC}]{
\label{def:i-iss-clf}
A function $V_\mathrm{s} : \set{X} \times \set{X} \to \R_{\geq 0}$~(with subscript s for stabilizability) is called an (exponential-decrease) i-ISS CLF if there exist lower and upper bounds on $V_\mathrm{s}$ with $\alpha_{\mathrm{s, l}}, \alpha_{\mathrm{s, u}} \in \mathcal{K}_{\infty},$ bounding functions $\sigma_{\mathrm{s}, w}, \sigma_{\pi} \in \mathcal{K}$, decay rate $\rho_{\mathrm{s}} \in \left(0, 1\right)$, and a control law $\pi : \set{X} \times \set{X} \times \set{U} \to \set{U}$ such that
\begin{subequations}
    \label{eq:i-iss-clf}
    \begin{gather}
    \alpha_{\mathrm{s, l}}(\lVert x - \tilde{x} \rVert) \leq V_\mathrm{s}(x, \tilde{x}) \leq \alpha_{\mathrm{s, u}}(\lVert x - \tilde{x} \rVert) \\
    V_\mathrm{s}(x^+, \tilde{x}^+) \leq \rho_{\mathrm{s}} V_\mathrm{s}(x, \tilde{x}) +
    \sigma_{\mathrm{s}, w}(\lVert w - \tilde{w} \rVert)  \label{eq:Vs_exp_cond} \\
    \lVert \pi(\tilde{x}, x, u) - u \rVert \leq \sigma_\pi (\lVert x - \tilde{x} \rVert),
    \end{gather}
\end{subequations}
where $x^+ {=}f_w(x, u, w)$, $\tilde{x}^+ {=} f_w(\tilde{x}, \pi(\tilde{x}, x, u), \tilde{w})$ for all $x, \tilde{x} \in \set{X}$, $u \in \set{U}$, and $w, \tilde{w} \in \set{W}$.
}

The i-ISS CLF describes the convergence of two trajectories $x$ and $\tilde{x}$ under respective disturbances $w$ and $\tilde{w}$ achieved by the additional feedback $\pi$. Similar to the notion of i-ISS, the decay rate~$\rho_{\mathrm{s}}$ in~\autoref{eq:Vs_exp_cond} defines how quickly the two trajectories converge, while $\sigma_{\mathrm{s}, w}$ bounds the effect of the disturbance. 
% This notion allows the efficient propagation of reachable states using~\autoref{eq:Vs_exp_cond}, which is used in the robust MPC to efficiently determine the constraint tightening. 
The condition in~\autoref{eq:Vs_exp_cond} allows us to efficiently propagate the set of states that are reachable at future time steps. 

We note that there exist various methods in the literature for the synthesis of such incremental Lyapunov functions, see~\cite{Pan2022, Koelewijn2021, Kohler2020-terminal-ingredients}. 
By making a few simplifying assumptions (e.g., continuously differentiable dynamics, quadratic incremental Lyapunov function, linear $\mathcal{K}$ functions), we can rewrite the conditions for the i-IOSS Lyapunov function and i-ISS CLF as linear matrix inequalities~(LMIs).
% Using simplifying assumptions on the conditions and assuming continuously differentiable dynamics, e.g., considering quadratic incremental Lyapunov functions with quadratic bounds and linear $\mathcal{K}$ functions, yields local linear matrix inequalities~(LMIs) for the i-IOSS Lyapunov function and i-ISS CLF throughout the state space. 
Furthermore, as discussed in~\cite{Kohler2020-terminal-ingredients}, by using a convexifying or a gridding approach, we can turn the synthesis problem into a finite semi-definite program~(SDP), which can be solved to achieve desired decay rates~$\rho_{\mathrm{d}}$ and~$\rho_{\mathrm{s}}$.
% Applying the convexifying or gridding approaches turns the synthesis problem into a finite semi-definite program~(SDP) as in~\cite{Kohler2020-terminal-ingredients}, which can be solved to achieve desired decay rates~$\rho$.
% The explicit representation of the incremental Lyapunov functions is, however, not necessary for the online optimization. Instead, the error propagation inside the MPC formulation are scalar operations based on the decay rates and $\mathcal{K}$ functions. The existence of the incremental Lyapunov function is solely needed to verify these scalar operations.
The existence of the incremental Lyapunov functions is checked prior to closed-loop execution, and the constants and the class $\mathcal{K}$ functions are pre-computed based on either~\autoref{eq:i-ioss-lyap} or~\autoref{eq:i-iss-clf}. As we will see in~\autoref{eq:rof-mpc}, in a robust output-feedback MPC formulation, we can replace the incremental Lyapunov conditions by a set of scalar conditions using the decay rates $\rho_{\mathrm{d}}, \rho_{\mathrm{s}}$ and the class $\mathcal{K}$ functions. These scalar conditions only minimally increase the computational demand compared to a nominal MPC. 

\subsubsection*{\textbf{State Estimation Error Bounds}}
One key ingredient in a robust output-feedback MPC framework is a robustly stable observer. We next introduce Luenberger-like observers and moving horizon estimators~(MHE) with valid online estimation error bounds. We consider the Luenberger-like observer as a back-up observer if the robustness conditions for the MHE are not satisfied. 

Inspired by Luenberger observers for linear systems, a nonlinear observer is given by
\begin{equation}
\label{eq:luenberger}
\hat{x}_{k + 1} = f(\hat{x}_k, u_k) + \hat{L}(\hat{x}_k, u_k, y_k) = \hat{f}(\hat{x}_k, u_k, y_k) \,,
\end{equation}
with the estimated state $\hat{x}_k \in \R^{n_x}$ and the estimator correction $\hat{L} : \set{X} \times \set{U} \times \set{Y} \to \set{X}$ that satisfies $\hat{L}(\hat{x}, u, h(\hat{x}, u)) = 0$.

\begin{assumption}[Robustly stable observer~\cite{Kohler2021-ROFMPC}]
\label{as:robust-observer}
There exist an incremental Lyapunov function $V_{\mathrm{o}} : \set{X} \times \set{X} \to \R_{\geq 0}$~(with subscript o for observer) with lower and upper bounds $\alpha_{\mathrm{o, l}}, \alpha_{\mathrm{o, u}} \in \mathcal{K}_{\infty},$ bounding functions $ \sigma_{ \mathrm{o}, w}, \sigma_{\mathrm{o, L}}, \sigma_{\mathrm{o, L}, w} \in \mathcal{K},$ and decay rate $\rho_{\mathrm{o}} \in \left(0, 1\right)$, such that \begin{subequations}
    \begin{gather}
    \alpha_{\mathrm{o, l}}(\lVert x - \hat{x} \rVert) \leq V_\mathrm{o}(x, \hat{x}) \leq \alpha_{\mathrm{o, u}}(\lVert x - \hat{x} \rVert) \\
    V_\mathrm{o}(\hat{f}(\hat{x}, u, y), f(x, u, w)) \leq \rho_{\mathrm{o}} V_\mathrm{o}(x, \hat{x}) +
    \sigma_{\mathrm{o}, w}(\lVert w \rVert)  \\
    \lVert \hat{L}(\hat{x}, u, y) \rVert \leq \sigma_{\mathrm{o, L}} (V_\mathrm{o}(x, \hat{x})) + \sigma_{\mathrm{o, L}, w} (\lVert w \rVert), \label{eq:estimator-error}
    \end{gather}
\end{subequations}
where $y {=}h_w(x, u, w)$ for all $x, \hat{x} \in \set{X}$, $u \in \set{U}$, and $w \in \set{W}$.
\end{assumption}

To simplify the following discussion, we assume that the incremental Lyapunov function is identical to the i-IOSS Lyapunov function with $V_{\mathrm{o}} = V_{\mathrm{d}}$. Sufficient conditions for $V_{\mathrm{o}}$ being an i-IOSS Lyapunov function can be found in~\cite{Kohler2021-ROFMPC} and require affine disturbances in the state dynamics and output equation~(as assumed in~\autoref{eq:system}) and a quadratic $V_{\mathrm{o}}$. 
% For results on the following state estimation error bounds we refer the reader to~\cite{Kohler2021-ROFMPC}.

Given an upper bound $\bar{e}_0$ on the initial estimation error $e_0$ such that $0 \leq e_0 \leq \bar{e}_0$ and the robust stable observer described by the i-IOSS Lyapunov function $V_{\mathrm{o}}$, the error bound can be recursively updated either offline or online
\begin{align}
\raisetag{25pt}
    \begin{split}
    \label{eq:estimate-luenberger}
        \bar{e}_{k + 1, \mathrm{offline}} &= \rho_{\mathrm{o}} \bar{e}_{k} + \sigma_{\mathrm{o}, w}(\lVert \bar{w} \rVert) \,, \\
        \bar{e}_{k + 1, \mathrm{online}} &= \rho_{\mathrm{d}} \bar{e}_{k} + \sigma_{\mathrm{d}, w}(\bar{w} + \lVert \hat{w}_k \rVert) + \sigma_{ \mathrm{d}, y}(\lVert \hat{y}_k - y_k \rVert ) \,,
    \end{split}
\end{align}
where $\hat{w}_k {=} E^\intercal \left(E E^\intercal \right)^{-1}\hat{L}(\hat{x}_k, u_k, y_k) $ with $E$ defined in \autoref{eq:system} and $\hat{y}_k {=} h(\hat{x}_k, u_k)$. The online update incorporates the latest output measurement $y_k$, while the offline update requires no additional data and relies entirely on the pre-computed decay rate $\rho_{\mathrm{o}}$, $\sigma_{\mathrm{o}, w}$, and~$\bar{w}$. See~\cite{Kohler2021-ROFMPC} for a proof of the recursive updates in~\autoref{eq:estimate-luenberger}.

The (exponential-decay) i-IOSS Lyapunov function~$V_{\mathrm{o}}$ with $V_{\mathrm{o}}(\hat{x}_0, x_0) \leq \bar{e}_0$ guarantees that for all $k \in \set{I}_{\geq 0}$, $\hat{x}_k, \bar{e}_k$ from~\autoref{eq:luenberger} and~\autoref{eq:estimate-luenberger} satisfy
\begin{subequations}
    \label{eq:estimate-properties}
    \begin{align}
    V_\mathrm{o}(x_k, \hat{x}_k) &\leq \bar{e}_k \label{eq:valid_bound} \\
    \lVert \hat{L}(\hat{x}_k, u_k, y_k) \rVert &\leq \sigma_{\mathrm{o, L}} (\bar{e}_k) + \sigma_{\mathrm{o, L}, w} (\lVert \bar{w} \rVert) \,, \label{eq:nom-obsv-diff} \\
    \raisetag{25pt}
    \bar{e}_{k + i}  &\leq \rho_{\mathrm{o}}^i \bar{e}_k +
    \frac{1 - \rho_{\mathrm{o}}^i}{1 - \rho_{\mathrm{o}}} \sigma_{\mathrm{o}, w}(\lVert \bar{w} \rVert)\,, \forall i \in \set{I}_{\geq 0} \label{eq:bound_prediction} \,.
    \end{align}
\end{subequations}
See~\cite{Kohler2021-ROFMPC} for the proof. The conditions in~\autoref{eq:estimate-properties} provide desirable properties for the state estimation since they yield valid bounds on the estimation error and their prediction. 

In the following, we introduce the MHE that determines the state estimate over a backward finite-time horizon $M{>}0$. The MHE is considered as the dual of MPC for state estimation and solves a finite-horizon optimization to determine an optimized state estimate using past input and output data. 

\begin{assumption}[Continuous i-IOSS Lyapunov function] There exists a function $\sigma_{\mathrm{d}} \in \mathcal{K}$, such that for any $x, \hat{x}, \tilde{x} \in \set{X}$, the i-IOSS Lyapunov function $V_{\mathrm{d}}$ satisfies
\begin{equation}
    \lvert V_{\mathrm{d}}(\hat{x}, x) - V_{\mathrm{d}}( \tilde{x}, x) \rvert \leq \sigma_{\mathrm{d}}(\lVert \hat{x} - \tilde{x} \rVert ) \,.
\end{equation}
\end{assumption}

In the following, we use the notation $\hat{x}_{i \vert k}$ to denote the open-loop prediction from state~$\hat{x}_k$ at time step~$k + i$. 
At time step $k$, the MHE with the backward horizon $M_k = \min \{k , M\}$ and the past initial state estimate $\hat{x}_{k - M_k}$ is given by the following nonlinear program~\cite{Kohler2021-ROFMPC}:
\begin{align}
    \raisetag{70pt}
    \label{eq:mhe-opt}
    \begin{split}
        J_{M_k, \mathrm{MHE}}^*(\hat{x}_{k - M_k}) = \\ 
        \min_{\hat{w}_{ - M_k :  - 1 \vert k}, \hat{x}_{ - M_k \vert k}} &
        \sum_{j = 1}^{M_k} \rho_{\mathrm{d}}^{j - 1} (\sigma_{\mathrm{d}, w}(\bar{w} + \lVert \hat{w}_{ - j \vert k} \rVert) + \\ & \quad \quad \sigma_{\mathrm{d}, y}(\lVert \hat{y}_{ - j \vert k} - y_{k -j} \rVert )) + \\
        & \quad \quad \rho_{\mathrm{d}}^{M_k} \bar{e}_{k - M_k} + \\ & \quad \quad \rho_{\mathrm{d}}^{M_k}  \sigma_{\mathrm{d}} (\lVert \hat{x}_{ - M_k \vert k} - \hat{x}_{k - M_k} \rVert) \\
        \text{s.t.} & \quad \hat{x}_{ - j + 1 \vert k} = f_w(\hat{x}_{ - j \vert k}, u_{k - j}, \hat{w}_{- j \vert k})\,, \\
        & \quad \hat{y}_{- j \vert k} = h_w(\hat{x}_{ - j \vert k}, u_{k - j}, \hat{w}_{ -j \vert k}) \,, \\
        & \quad \forall j \in \set{I}_{\left[1, M_k\right]} \,.
    \end{split}
\end{align}
The resulting state estimate and estimation error bound can be determined by propagating the minimizers of~\autoref{eq:mhe-opt} $\hat{w}^*_{ - M_k :  - 1 \vert k}, \hat{x}^*_{ - M_k \vert k}$ until $\hat{x}_{0 \vert k}^*$ is reached:
\begin{align}
    \begin{split}
    \label{eq:estimate-mhe}
        \hat{x}_{k, \mathrm{MHE}} &= \hat{x}_{0 \vert k}^* \,, \\
        \hat{e}_{k, \mathrm{MHE}} &= J^*_{M_k, \mathrm{MHE}}(\hat{x}_{k - M_K}) \,. 
    \end{split}
\end{align}

% TODO: Add discussion here for clarification!
As the MHE requires additional assumptions to be robustly stable, we can instead check conditions~\autoref{eq:nom-obsv-diff} and~\autoref{eq:bound_prediction} on the estimation error for the MHE update to guarantee robustness of the MHE update. In the following, we assume that the resulting estimation error bound from the MHE is valid according to~\autoref{eq:nom-obsv-diff} and~\autoref{eq:bound_prediction}. In practice, these conditions must be checked at every step. If the conditions are not met, the backup Luenberger-like observer is used. 

% There exist other state estimators which yield estimates with valid error bounds, for example, set-membership estimators~\cite{Kohler2021-ROFMPC}. These estimators could be readily incorporated in the design of the proposed safety filter. However, we did not consider set-membership estimators here due to the additional online computation cost. 

\subsubsection*{\textbf{Prediction Error Bounds}}
In this part, we highlight the use of incremental Lyapunov functions for  tube-based predictions of the combined error bounds originating from the estimation error and the additive disturbance on the dynamics. This requires the analysis of the incremental stability between the true state $x$, the state estimate $\hat{x}$, and a nominal state $\bar{x}$. 

Under the assumption that $V_{\mathrm{s}}$ is an i-ISS CLF, there also exist functions $\sigma_{\mathrm{s, o}}, \sigma_{\mathrm{s, o}, w} \in \mathcal{K}$ such that:
\begin{equation}
    V_{\mathrm{s}}(\bar{x}^+, \hat{x}^+) \leq \rho_{\mathrm{s}} V_{\mathrm{s}}(\bar{x},\hat{x}) + \sigma_{\mathrm{s, o}} (V_{\mathrm{o}}(\hat{x}, x) ) +  \sigma_{\mathrm{s, o}, w} (\bar{w}) \,,
\end{equation}
where $u = \pi(\hat{x}, \bar{x}, \bar{u})$, $y = h_w(x, u, w)$,  $\bar{x}^+ = f(\bar{x}, \bar{u})$, and $\hat{x}^+=\hat{f}(\hat{x}, u, y)$  (see~\cite{Kohler2021-ROFMPC} for a proof of this implication). 
The conditions on $V_{\mathrm{s}}$ can be used to predict the combined errors from the estimation and the disturbed state dynamics. 

\subsubsection*{\textbf{Tube-based Robust Output-feedback MPC}}
The robust output-feedback MPC minimizes a user-defined continuous stage cost $\ell : \set{X} \times \set{U} \times \R_{\geq 0} \times \R_{\geq 0} \to \R$, which allows for additional arguments to also include the bounds on the estimation and prediction errors. Robust constraint satisfaction is guaranteed via the proper design of a continuous terminal cost $V_{\mathrm{f}} : \set{X} \times \R_{\geq 0} \times \R_{\geq 0} \to \R$ and a terminal constraint set $\set{X}_{\mathrm{f}} \subseteq \set{X} \times \R_{\geq 0} \times \R_{\geq 0}$. 
At time step $k$, for a given state estimate from the Luenberger-like observer~\autoref{eq:estimate-luenberger} and~\autoref{eq:luenberger} or the MHE~\autoref{eq:estimate-mhe}, the robust output-feedback MPC open-loop optimization problem over a finite horizon $N\in\set{I}_{\geq 1}$ is:
\begin{subequations}
    \raisetag{110pt}
    \label{eq:rof-mpc}
    \begin{align}
        J_{N, \mathrm{MPC}}^* & (\hat{x}_k, \bar{e}_k) = \\
        \min_{\bar{u}_{0 : N - 1 \vert k}, \bar{x}_{0 \vert k }} & \sum_{i = 0}^{N - 1} l(\bar{x}_{i \vert k}, \bar{u}_{i \vert k}, \bar{e}_{i \vert k}, \bar{s}_{i \vert k}) + \\
        & \quad\quad\quad V_{\mathrm{f}}(\bar{x}_{N \vert k}, \bar{e}_{N \vert k}, \bar{s}_{N \vert k}) \label{eq:init-tube}\\
        \text{s.t.} & \quad \bar{s}_{0 \vert k} = V_{\mathrm{s}} (\bar{x}_{0 \vert k}, \hat{x}_k ) \,, \\
        & \quad \bar{x}_{i + 1 \vert k} = f(\bar{x}_{i \vert k}, \bar{u}_{i \vert k}) \,, \forall i \in \set{I}_{\left[0, N- 1 \right]} \,, \label{eq:nom-dynamics} \\
        & \quad \bar{e}_{j \vert k} = \frac{1 - \rho_{\mathrm{d}}^j}{1 - \rho_{\mathrm{d}}} \sigma_{\mathrm{o}, w}(\bar{w}) + \rho_{\mathrm{d}}^j \bar{e}_k \,, \forall j \in \set{I}_{\left[0, N \right]}  \,, \label{eq:estimation-dyn-mpc} \\
        & \quad \bar{s}_{i + 1 \vert k} = \rho_{\mathrm{s}} \bar{s}_{i \vert k} + \sigma_{\mathrm{s, o}} (\bar{e}_{i \vert k}) + \sigma_{\mathrm{s, o}, w} (\bar{w}) \,, \label{eq:tube-dyn-mpc} \\
        & \quad (x_{i \vert k}, \pi(\hat{x}_{i \vert k}, \bar{x}_{i \vert k}, \bar{u}_{i \vert k})) \in \set{Z} \,, \label{eq:consrtaints}\\
        & \quad  V_{\mathrm{s}}(\bar{x}_{i \vert k}, \hat{x}_{i \vert k}) \leq \bar{s}_{i \vert k} \,, \forall x_{i \vert k}, \hat{x}_{i \vert k} \,, \label{eq:estimate-in-tube}\\
        & \quad V_{\mathrm{o}}(\hat{x}_{i \vert k}, x_{i \vert k}) \leq \bar{e}_{i \vert k} \,,  \forall x_{i \vert k}, \hat{x}_{i \vert k} \,, \label{eq:state-in-tube}\\
        & \quad (\bar{x}_{N \vert k}, \bar{e}_{N \vert k}, \bar{s}_{N \vert k}) \in \set{X}_{\mathrm{f}} \label{eq:terminal-constraint} \,.
    \end{align}
\end{subequations}
\autoref{eq:init-tube} guarantees that the initial nominal state is inside the set around the state estimate given by $V_{\mathrm{s}}$. The constraints in~\autoref{eq:nom-dynamics}, \autoref{eq:estimation-dyn-mpc}, and~\autoref{eq:tube-dyn-mpc} specify the nominal dynamics, estimation error bound and prediction error bound propagation, respectively. \autoref{eq:consrtaints} guarantees constraint satisfaction for the true state and control input. Proper containment inside the tubes given by $V_{\mathrm{s}}$ and $V_{\mathrm{o}}$ for the nominal and true state is achieved by~\autoref{eq:estimate-in-tube} and~\autoref{eq:state-in-tube}. Finally, the terminal constraint is satisfied with~\autoref{eq:terminal-constraint}.

The minimizers of~\autoref{eq:rof-mpc} are $\bar{u}^*_{0 : N - 1 \vert k}, \bar{x}*_{0 \vert k }$.
At each time step, only the first optimal control input $\bar{u}_{0 \vert k}^*$ is applied to the system through $u_k = \pi(\hat{x}_k, \bar{x}_{0 \vert k}^*, \bar{u}_{0 \vert k}^*)$. Then the open-loop optimization problem is solved again at the next time step using an updated state estimate $\hat{x}_{k + 1}$ and error bound $\bar{e}_{k + 1}$ over a shifted time horizon.  

Properly designed terminal ingredients provide constraint satisfaction and recursive feasibility of the robust output-feedback MPC. 
\begin{assumption}[Terminal constraint set~\cite{Kohler2021-ROFMPC}]
\label{as:terminal-set}
There exists a control law $\pi_{\mathrm{f}} : \set{X} \to \set{U}$ such that for all $(\bar{x}, \bar{e}, \bar{s}) \in \set{X}_{\mathrm{f}}$ and $x, \hat{x} \in \set{X}$ satisfying $V_{\mathrm{s}}(\bar{x}, \hat{x}) \leq \bar{s}$ and $V_{\mathrm{o}}(\hat{x}, x) \leq \bar{e}$, and for all $\bar{s}^+, \bar{e}^+ \in \R_{\geq 0}$ satisfying $\bar{s}^+ \leq \rho_{\mathrm{s}} \bar{s} + \sigma_{\mathrm{s, o}}(\bar{e}) + \sigma_{\mathrm{s, o}, w}(\bar{w})$ and $\bar{e}^+ \leq \rho_{\mathrm{o}} \bar{e} + \sigma_{\mathrm{o}, w} (\bar{w})$, it holds that
\begin{subequations}
\begin{align}
        (\bar{x}^+, \bar{e}^+, \bar{s}^+) \in \set{X}_{\mathrm{f}} \,, \quad
    (x, \pi(\hat{x}, \bar{x}, \bar{u})) \in \set{Z} \,,
\end{align}
\end{subequations}
where $\bar{x}^+{=}f(\bar{x}, \bar{u})$ and $\bar{u}{=}\pi_{\mathrm{f}}(\bar{x})$. 
\end{assumption}

Recursive feasibility and boundedness of the closed-loop cost can be shown using a terminal constraint set satisfying \autoref{as:terminal-set} and some  additional assumptions on the stage cost~$\ell$ and the terminal cost function~$V_{\mathrm{f}}$~\cite{Kohler2021-ROFMPC}. 

Under the assumption of continuous constraint functions and the initial nominal state being equal to the initial state estimate, $\bar{x}^*_{0 \vert k} = \hat{x}_{k}$ (this implies $\bar{s}^*_{0 \vert k} = 0$), the constraint tightening can be pre-computed. The resulting computational demand is comparable to a nominal MPC~\cite{Kohler2021-ROFMPC}. 

We use the robust output-feedback MPC framework as the basis for our proposed safety filter to provide safety guarantees for constrained nonlinear systems with bounded estimation errors. 

\subsection{Model Predictive Safety Certification}

The model predictive safety certification~(MPSC) method is inspired by model predictive control techniques. Instead of providing an optimal control input, the MPSC takes an arbitrary control input and certifies the provided input either as safe or unsafe~\cite{Wabersich2021}. If the control input is considered as safe, it passes through the MPSC without modification. Otherwise, the control input is deemed unsafe and is minimally modified (according to a distance measure) such that the control input is safe. This relies on techniques from MPC to guarantee constraint satisfaction for all time steps. Guarantees are provided through the existence of a safe set~$\set{S}_{\mathrm{f}}$: 
\begin{assumption}[Safe set]
\label{as:safe-set}
There exists a control policy $\pi_{\mathrm{safe}} : \set{X} \to \set{U}$ such that for all $(\bar{x}_k, \bar{e}_k, \bar{s}_k) \in \set{S}_{\mathrm{f}}$ and for all $j \in \set{I}_{\geq 0} $ and all $x_{k + j}, \hat{x}_{k + j} \in \set{X}$ satisfying $V_{\mathrm{s}}(\bar{x}_{k + j}, \hat{x}_{k + j}) \leq \bar{s}_{k + j}$ and $V_{\mathrm{o}}(\hat{x}_{k + j}, x_{k + j}) \leq \bar{e}_{k + j}$, it holds that
\begin{equation}
        % \bar{x}_{k + j} =  f(\bar{x}_{k + j - 1}, \bar{u}_{k + j - 1}) &\in \set{X} \,, \\
    (x_{k + j}, \pi(\hat{x}_{k + j}, \bar{x}_{k + j}, \bar{u}_{k + j})) \in \set{Z} 
\end{equation}
with $\bar{x}_{k + j + 1}{=}f(\bar{x}_{k + j}$, $\bar{u}_{k + j})$, $\bar{u}_{k + j}{=}\pi_{\mathrm{safe}}(\bar{x}_{k + j})$ and with any $\bar{s}_{k + j + 1}, \bar{e}_{k + j + 1} \in \R_{\geq 0}$ satisfying $\bar{s}_{k + j + 1} \leq \rho_{\mathrm{s}} \bar{s}_{k + j} + \sigma_{\mathrm{s, o}}(\bar{e}_{k + j}) + \sigma_{\mathrm{s, o}, w}(\bar{w})$ and $\bar{e}_{k + j + 1} \leq \rho_{\mathrm{o}} \bar{e}_{k + j} + \sigma_{\mathrm{o}, w} (\bar{w})$. 
% There exist a safe set $\set{S}_{\mathrm{f}} \subset \set{X}$ and an associated control policy $\pi_\mathrm{safe} : \set{X} \times \set{U} \to \set{U}$, such that if $x_k \in \set{S}_{\mathrm{f}}$, then application of $u_i = \pi_\mathrm{safe}(x_i, \pi_\mathcal{L}(x_{i}))$ implies that $x_i \in \set{X}$ and $u_i \in \set{U}$ for all $i \geq k$.  
\end{assumption}

The assumption of the existence of safe set $\set{S}_{\mathrm{f}}$ is less restrictive than the assumption of a robustly positive invariant terminal set $\set{X}_{\mathrm{f}}$ as in~\autoref{as:terminal-set}. However, guaranteeing constraint satisfaction for all future time steps requires additional attention (cf. proof for~\autoref{th:rof-psf}). 

% TODO: Is the dependency on $\pi_\mathcal{L}(x_{i})$ for $\pi_{\mathrm{safe}}$ necessary?... I have removed it for now

We present the idea of MPSC using nominal dynamics and full-state measurements, such that $x_k = \bar{x}_k$ and $u_k = \pi(\hat{x}_{k}, \bar{x}_{k}, \bar{u}_{k}) = \pi(\bar{x}_{k}, \bar{x}_{k}, \bar{u}_{k})= \bar{u}_{k}$. The safety filter framework can be extended to uncertain systems using a formulation as in robust MPC. 
The main difference between the MPSC and a nominal MPC is the objective function. For the MPSC, the objective is the squared error between the first optimal control input $u^*_{0 \vert k}$ and the provided control input at the current state $\tilde{u} = \pi_{\mathcal{L}}(x_{k})$, where $\pi_{\mathcal{L}}: \set{X} \times \set{U}$ is an arbitrary and potentially unsafe control policy. Then the open-loop optimization problem for the MPSC at time step~$k$ under the assumption of zero disturbances and perfect state measurements is given by: 
\begin{align}
\label{eq:mpsc-opt}
    \begin{split}
        J_{N, \mathrm{MPSC}}^*(x_k) = \min_{u_{0 : N - 1 \vert k}} & \lVert \pi_\mathcal{L}(x_{k}) - u_{0 \vert k} \rVert^2_2 \\
        \text{s.t.} & \quad \forall i \in \set{I}_{\left[0, N- 1 \right]}  \,, \\
        & \quad x_{i + 1 \vert k} = f(x_{i \vert k}, u_{i \vert k}) \,, \\
        & \quad (x_{i \vert k}, u_{i \vert k}) \in \set{Z} \,, \\
        & \quad x_{0 \vert k} = x_k \,, \\
        & \quad x_{N \vert k} \in \set{S}_{\mathrm{f}} \,.
    \end{split}
\end{align}
The certified control inputs that can be safely applied to the system are given by the minimizer $u_k = u_{0 \vert k}^*$ in an MPC fashion.
% The problem is solved again at the next time step with a shifted horizon to certify the next control input from $\pi_{\mathcal{L}}$.
Proofs for constraint satisfaction for all future time steps using an adaptive horizon approach and guarantees for handling uncertain state dynamics can be found in~\cite{Wabersich2021}.

Previous results assume full-state measurements without noise. In this work, we extend the model predictive safety certification optimization problem to handle uncertain output measurements using state estimates with valid error bounds.  

%%%%%%%%%%%%%%%%%%%%%%%%%%%%%%%%%%%%%%%%%%%%%%%%%%%%%%%%%%%%%%%%%%%%%%%%%%%%%%%%
\section{\NAMECAPS~(\acronym)}
In this section, we present our proposed \acronym~(see \autoref{fig:blockdiagram}). The \acronym~takes an input from arbitrary state-feedback control policies and either certifies the input as safe or modifies the desired control input minimally (according to the $\ell_2$-norm) to still guarantee safety for uncertain constrained nonlinear systems. 

%We outline the \acronym~algorithm in \autoref{subsec:algorithm_overview} and show the theoretic properties of the algorithm in \autoref{subsec:constraint_satisfaction}. 

\subsection{Overview of the \acronym~Algorithm}
\label{subsec:algorithm_overview}
%Here, we give an overview over the robust output-feedback PSF algorithm. 
As we are dealing with noisy and partial state measurements, we leverage the state estimation techniques from the previous section to determine state estimates with valid error bounds. 
To simplify the formulation, we assume that the MHE is a robustly stable observer. In practice, this requires checking the conditions~\autoref{eq:nom-obsv-diff} and~\autoref{eq:bound_prediction}. The estimation error bound at time step $k \in \set{I}_{\geq 1} $ is computed as
\begin{equation}
    \label{eq:estimate}
    \bar{e}_{k} = \min \{ \bar{e}_{k, \mathrm{offline}}, \bar{e}_{k, \mathrm{online}}, \bar{e}_{k, \mathrm{MHE}} \} \,,
\end{equation}
to determine the smallest upper bound and the state estimate is then either given by the associated MHE or the Luenberger-like observer with
\begin{equation}
    \label{eq:estimate-bound}
    \hat{x}_k = \begin{cases} 
    \hat{x}_{k, \mathrm{MHE}} & \mathrm{if}~\bar{e}_{k} = \bar{e}_{k, \mathrm{MHE}} \,, \\
    \hat{f}(\hat{x}_{k - 1}, u_{k - 1}, y_{k - 1}) & \mathrm{otherwise} \,.
    \end{cases}
\end{equation}
The pair of state estimate~$\hat{x}_k$ and estimation error bound~$\bar{e}_k$ are then used to initialize our predictive safety filter. 

By combining the robust output-feedback MPC and the MPSC, we propose the following open-loop optimization problem for the robust output-feedback predictive safety filter:
\begin{subequations}
\label{eq:rof-psf}
    \begin{align}
        J_{N, \mathrm{\acronym}}^* & (\hat{x}_k, \bar{e}_k) = \nonumber \\
        \min_{\bar{u}_{0 : N - 1 \vert k}, \bar{x}_{0 \vert k }} & \lVert \pi_\mathcal{L}(\hat{x}_{k}) - \pi(\hat{x}_{k}, \bar{x}_{0 \vert k}, \bar{u}_{0 \vert k}) \rVert^2_2 \label{eq:safe-objective}\\
        \text{s.t.} & \quad \bar{s}_{0 \vert k} = V_{\mathrm{s}} (\bar{x}_{0 \vert k}, \hat{x}_k ) \,, \\
        & \quad \bar{x}_{i + 1 \vert k} = f(\bar{x}_{i \vert k}, \bar{u}_{i \vert k}) \,, \forall i \in \set{I}_{\left[0, N- 1 \right]} \,, \label{eq:state-dyn} \\
        & \quad \bar{e}_{j \vert k} = \frac{1 - \rho_{\mathrm{d}}^j}{1 - \rho_{\mathrm{d}}} \sigma_{\mathrm{o}, w}(\bar{w}) + \rho_{\mathrm{d}}^j \bar{e}_k \,, \forall j \in \set{I}_{\left[0, N \right]}  \,, \label{eq:est-dyn} \\
        % & \quad \bar{x}_{i + 1 \vert k} = f(\bar{x}_{i \vert k}, \bar{u}_{i \vert k})\,,  \\
        % & \quad \bar{e}_{i \vert k} = \frac{1 - \rho_{\mathrm{d}}^i}{1 - \rho_{\mathrm{d}}} \sigma_{w, \mathrm{o}}(\bar{w}) + \rho_{\mathrm{d}}^i \bar{e}_k \,, \\
        & \quad \bar{s}_{i + 1 \vert k} = \rho_{\mathrm{s}} \bar{s}_{i \vert k} + \sigma_{\mathrm{s, o}} (\bar{e}_{i \vert k}) + \sigma_{\mathrm{s, o}, w} (\bar{w}) \,, \label{eq:tube-dyn} \\
        & \quad (x_{i \vert k}, \pi(\hat{x}_{i \vert k}, \bar{x}_{i \vert k}, \bar{u}_{i \vert k})) \in \set{Z} \,, \\
        & \quad  V_{\mathrm{s}}(\bar{x}_{i \vert k}, \hat{x}_{i \vert k}) \leq \bar{s}_{i \vert k} \,, \forall x_{i \vert k}, \hat{x}_{i \vert k} \,, \\
        & \quad V_{\mathrm{o}}(\hat{x}_{i \vert k}, x_{i \vert k}) \leq \bar{e}_{i \vert k} \,,  \forall x_{i \vert k}, \hat{x}_{i \vert k} \,, \\
        & \quad (\bar{x}_{N \vert k}, \bar{e}_{N \vert k}, \bar{s}_{N \vert k}) \in \set{S}_{\mathrm{f}} \label{eq:safe-terminal-constraint}\,.
    \end{align}
\end{subequations}
In the proposed~\acronym~we use the objective function from~\autoref{eq:mpsc-opt} in~\autoref{eq:safe-objective} to certify arbitrary control policies and substitute the terminal set $\set{X}_{\mathrm{f}}$ for the more general safe set $\set{S}_{\mathrm{f}}$ in~\autoref{eq:safe-terminal-constraint}.

The minimizers of~\autoref{eq:rof-psf} are denoted by $\bar{u}^*_{0 : N - 1 \vert k}, \bar{x}*_{0 \vert k }$.
We apply $u_k = \pi(\hat{x}_{k}, \bar{x}^*_{0 \vert k}, \bar{u}^*_{0 \vert k})$ to the system and solve the open-loop optimization again at the next time step with a receding horizon. If the problem is infeasible at the next time step, which is possible due to the more general assumption on the safe set~$\set{S}_{\mathrm{f}}$, the procedure in~\autoref{alg:rof-psf} must be followed to guarantee constraint satisfaction for all future time steps. If $\pi_\mathcal{L}(\hat{x}_{k})$ is a feasible control input then $u_k = \pi_\mathcal{L}(\hat{x}_{k})$. Otherwise $u_k$ is the minimal modification of $\pi_\mathcal{L}(\hat{x}_{k})$ such that the optimization in~\autoref{eq:rof-psf} is feasible.

% TODO: Should we rename $\pi$ to $\pi$?

The proposed algorithm is given in~\autoref{alg:rof-psf}, where the additional subscript $\tilde{N}$ in lines 9 and 10 of~\autoref{alg:rof-psf} indicates that this nominal state or control input is the result of an optimization with reduced horizon~$\tilde{N} < N$. 
% The algorithm for the offline computation can be found in~\cite[Alg.~2]{Kohler2021-ROFMPC}.
\begin{algorithm}
\caption{Robust predictive output-feedback safety filter}\label{alg:rof-psf}
\begin{algorithmic}[1]
\While{True}
    \State Update $\hat{x}_k$ and $\bar{e}_k$ using~\autoref{eq:estimate} and~\autoref{eq:estimate-bound}, respectively.
    \If{\ref{eq:rof-psf} is feasible for horizon $N$} 
        \State $k_{\mathrm{feasible}} \gets k$ \Comment{Update feasible time step}
        \State $u_k \gets \pi(\hat{x}_{k}, \bar{x}^*_{0 \vert k}, \bar{u}^*_{0 \vert k})$
    \Else
        \If{$k < N + k_{\mathrm{feasible}}$}
            \State Solve \ref{eq:rof-psf} for horizon  $\tilde{N} := N - (k - k_{\mathrm{feasible}})$
            \State $u_k \gets \pi(\hat{x}_{k}, \bar{x}^*_{0 \vert k, \tilde{N}}, \bar{u}^*_{0 \vert k, \tilde{N}})$
            \State $\bar{x}_k \gets \bar{x}^*_{0 \vert k, \tilde{N}}$
        \Else 
            \State $u_k \gets \pi(\hat{x}_{k}, \bar{x}_{k}, \pi_\mathrm{safe}(\bar{x}_k))$
            \State $\bar{x}_k \gets f(\bar{x}_k, \pi_\mathrm{safe}(\bar{x}_k))$ 
        \EndIf
    \EndIf 
    \State Apply the certified control input $u_k$
    \State $k \gets k + 1$
\EndWhile
\end{algorithmic}
\end{algorithm}

% TODO: Missing initialization with $\bar{e}_0$ and $\hat{x}_k$ and $k_{\mathrm{feasible}} = 0$.

The formulation can be further simplified as in the robust output-feedback MPC with pre-computed constraint tightening and tube size 0 at the initial nominal state~\cite{Kohler-simple-ROF}. 

% The proposed design can readily be extended to simultaneously optimize over the state estimate and control input following~\cite{Kohler2021-ROFMPC}.

% TODO: these don't fit here ~\cite{Lofberg2002, Copp2017}. 

\subsection{\acronym~Constraint Satisfaction Guarantees}
\label{subsec:constraint_satisfaction}
By leveraging the results in~\cite{Kohler2021-ROFMPC, Wabersich2021}, we show that the proposed \acronym~guarantees constraint satisfaction for all time steps.
\begin{theorem}
\label{th:rof-psf}
Let Assumptions~\ref{as:robust-observer} and~\ref{as:safe-set} hold. Suppose that the system in~\autoref{eq:system} admits an (exponential-decay) i-IOSS Lyapunov function~(\autoref{def:i-ioss-lyap})  and an (exponential decrease) i-ISS CLF~(\autoref{def:i-iss-clf}). If the optimization in~\autoref{eq:rof-psf} is feasible at $k = 0$ with a valid initial estimation error bound $V_{\mathrm{o}}(\hat{x}_0, x_0) \leq \bar{e}_0$, then the~\acronym~described in~\autoref{alg:rof-psf} guarantees constraint satisfaction for all $k \in \set{I}_{\geq 0}$. Furthermore, if the safe set $\set{S}_{\mathrm{f}}$ also satisfies~\autoref{as:terminal-set}, then the optimization in~\autoref{eq:rof-psf} is recursively feasible. 
\end{theorem}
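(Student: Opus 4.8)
The plan is to prove the two assertions separately, but both rest on a single observation that I would isolate first: whenever \autoref{eq:rof-psf} is feasible (with full or reduced horizon), the applied input $u_k = \pi(\hat{x}_k, \bar{x}^*_{0 \vert k}, \bar{u}^*_{0 \vert k})$ is safe. This follows by evaluating the robustly tightened constraint at $i = 0$ together with the valid tube containments, $V_{\mathrm{o}}(\hat{x}_k, x_k) \leq \bar{e}_k$ (from \autoref{as:robust-observer} and \autoref{eq:valid_bound}) and $V_{\mathrm{s}}(\bar{x}_{0 \vert k}, \hat{x}_k) \leq \bar{s}_{0 \vert k}$ (the initialization constraint), which certify $(x_k, u_k) \in \set{Z}$ for the realized state and input. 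I would state this as a short lemma and then apply it verbatim at every step on which the optimization returns a solution.

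For the constraint-satisfaction claim I would then handle the infeasible steps via the adaptive-horizon logic of \autoref{alg:rof-psf}. Let $k_{\mathrm{feasible}}$ be the most recent step at which the full-horizon problem was solved; its minimizer yields a nominal trajectory whose terminal state lies in $\set{S}_{\mathrm{f}}$ by \autoref{eq:safe-terminal-constraint}. I would show that for $k_{\mathrm{feasible}} < k < N + k_{\mathrm{feasible}}$ the reduced-horizon problem with $\tilde{N} = N - (k - k_{\mathrm{feasible}})$ is feasible, using the length-$\tilde{N}$ tail of the $k_{\mathrm{feasible}}$ solution as a candidate. The nominal dynamics hold by construction; the error bound satisfies $\bar{e}_k \leq \bar{e}_{(k - k_{\mathrm{feasible}}) \vert k_{\mathrm{feasible}}}$ because the minimum in \autoref{eq:estimate} is bounded above by the offline update, whose unrolling coincides with \autoref{eq:est-dyn} once $\rho_{\mathrm{o}} = \rho_{\mathrm{d}}$ (from $V_{\mathrm{o}} = V_{\mathrm{d}}$); and the terminal state is inherited, so \autoref{eq:safe-terminal-constraint} still holds. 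When $k$ reaches $N + k_{\mathrm{feasible}}$ the tracked nominal state has arrived in $\set{S}_{\mathrm{f}}$ and the algorithm switches to $\pi_{\mathrm{safe}}$; safety for all remaining steps is then exactly the conclusion of \autoref{as:safe-set}, since the nominal propagation and the $\bar{e}, \bar{s}$ recursions in lines~12--13 match the inequalities assumed there.

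The hard part will be verifying that this candidate tail is genuinely feasible despite the state estimate changing between $k_{\mathrm{feasible}}$ and $k$: the tube is re-initialized as $\bar{s}_{0 \vert k} = V_{\mathrm{s}}(\bar{x}_{0 \vert k}, \hat{x}_k)$ using the \emph{realized} estimate, not the one predicted at $k_{\mathrm{feasible}}$, so the stored tube value is not obviously an upper bound. The resolution I would use is the prediction-error property derived from the i-ISS CLF (\autoref{def:i-iss-clf}), $V_{\mathrm{s}}(\bar{x}^+, \hat{x}^+) \leq \rho_{\mathrm{s}} V_{\mathrm{s}}(\bar{x}, \hat{x}) + \sigma_{\mathrm{s, o}}(V_{\mathrm{o}}(\hat{x}, x)) + \sigma_{\mathrm{s, o}, w}(\bar{w})$, which, combined with $V_{\mathrm{o}}(\hat{x}, x) \leq \bar{e}$, is precisely the recursion \autoref{eq:tube-dyn}. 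By induction this forces the realized estimate to satisfy $V_{\mathrm{s}}(\bar{x}^*_{(k - k_{\mathrm{feasible}}) \vert k_{\mathrm{feasible}}}, \hat{x}_k) \leq \bar{s}^*_{(k - k_{\mathrm{feasible}}) \vert k_{\mathrm{feasible}}}$, so the re-initialized tube is no larger than the corresponding segment of the original tube and every tightened constraint is inherited. This is the single place where the incremental-stability machinery, rather than bookkeeping, does the real work.

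Finally, for recursive feasibility I would argue that if $\set{S}_{\mathrm{f}}$ also satisfies \autoref{as:terminal-set}, the infeasible branch is never triggered. Given a feasible solution at $k$, the standard shift-and-append construction (drop the first input, append $\pi_{\mathrm{f}}(\bar{x}_{N \vert k})$) yields a feasible solution at $k + 1$: \autoref{as:terminal-set} guarantees both that the appended nominal step keeps the terminal triple in $\set{X}_{\mathrm{f}} = \set{S}_{\mathrm{f}}$ and that the associated tightened constraint holds, while the $\bar{e}$ and $\bar{s}$ updates it assumes coincide with \autoref{eq:est-dyn} and \autoref{eq:tube-dyn}. Hence \autoref{eq:rof-psf} is feasible for horizon $N$ at every step, which establishes recursive feasibility and, through the feasibility lemma of the first paragraph, re-establishes constraint satisfaction without ever invoking the adaptive-horizon fallback.
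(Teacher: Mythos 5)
Your proposal follows essentially the same route as the paper's proof: feasibility at a given step implies $(x_k,u_k)\in\set{Z}$ via the tube containments; infeasibility is absorbed by re-solving with a shrinking horizon using the tail of an earlier feasible solution as candidate; after $N$ consecutive infeasible steps the nominal triple has reached $\set{S}_{\mathrm{f}}$ and \autoref{as:safe-set} with $\pi_{\mathrm{safe}}$ certifies all remaining steps; and under \autoref{as:terminal-set} the standard shift-and-append candidate restores recursive feasibility. In fact your third paragraph makes explicit a point the paper compresses into the phrase ``due to the over-approximation of the error bounds'': that the re-initialized tube $\bar{s}_{0\vert k}=V_{\mathrm{s}}(\bar{x}_{0\vert k},\hat{x}_k)$ is evaluated at the \emph{realized} estimate, and that the i-ISS CLF propagation inequality matched against \autoref{eq:tube-dyn} is what makes the stored tube values valid upper bounds.

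One imprecision is worth fixing. You anchor the induction at $k_{\mathrm{feasible}}$, claiming $V_{\mathrm{s}}(\bar{x}^*_{(k-k_{\mathrm{feasible}})\vert k_{\mathrm{feasible}}},\hat{x}_k)\leq \bar{s}^*_{(k-k_{\mathrm{feasible}})\vert k_{\mathrm{feasible}}}$. But the closed loop does not track the $k_{\mathrm{feasible}}$ tail: at each intermediate step \autoref{alg:rof-psf} re-solves the reduced-horizon problem and applies $\pi(\hat{x}_k,\bar{x}^*_{0\vert k,\tilde{N}},\bar{u}^*_{0\vert k,\tilde{N}})$, whose minimizer need not coincide with that tail (the objective chases $\pi_{\mathcal{L}}$, not the old plan), so the propagation inequality does not by itself yield your stated containment along the $k_{\mathrm{feasible}}$ trajectory. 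The repair is immediate and is what the paper's proof actually does: run the induction step-to-step, taking as candidate at time $k{+}1$ the tail of the solution \emph{applied} at time $k$ (full- or reduced-horizon); since the applied input is exactly $\pi$ evaluated at that solution's initial nominal pair, your i-ISS CLF argument gives $V_{\mathrm{s}}(\bar{x}^*_{1\vert k},\hat{x}_{k+1})\leq\bar{s}^*_{1\vert k}$, the realized $\bar{e}_{k+1}$ is dominated by the predicted $\bar{e}_{1\vert k}$ via \autoref{eq:estimate} and \autoref{eq:bound_prediction} (as you note, using $V_{\mathrm{o}}=V_{\mathrm{d}}$), and the terminal constraint is inherited with horizon reduced by one. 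With that re-anchoring your argument is sound; note also that both you and the paper implicitly use the Luenberger-type update $\hat{x}^+=\hat{f}(\hat{x},u,y)$ in the propagation step, whereas \autoref{eq:estimate-bound} may select the MHE estimate, which is covered only by the assumed validity checks on \autoref{eq:nom-obsv-diff} and \autoref{eq:bound_prediction}.
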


\begin{proof}
Suppose that~\autoref{eq:rof-psf} is feasible at time step $k \in \set{I}_{\geq 0}$. The sequence of nominal optimal control inputs is $\{ \bar{u}_{0 \vert k}^*, \dots, \bar{u}_{ N - 1 \vert k}^* \}$ and the initial optimal nominal state is $\bar{x}^*_{0 \vert k}$. The true state and input satisfy $(x_k, u_k) \in \set{Z}$ with $u_k = \pi(\hat{x}_{k}, \bar{x}^*_{0 \vert k}, \bar{u}^*_{0 \vert k})$ due to the over-approximation of the error bounds $\bar{e}_k$ and $\bar{s}^*_{0 \vert k}$ guaranteed by the i-IOSS Lyapunov function and the i-ISS CLF under~\autoref{as:robust-observer}. Then we can safely apply the control input $u_k$. 

If the optimization in~\autoref{eq:rof-psf} is feasible at time $(k + 1)$, then constraint satisfaction for the true state $(x_{k + 1}, u_{k + 1}) \in \set{Z}$ is again guaranteed by the over-approximation of the error bounds and we apply $u_{k + 1}$. 

If the optimization is infeasible at time $(k{+}1)$, we can construct a feasible solution using a previous feasible solution for a reduced horizon $(N{-}1)$. This is guaranteed to be feasible since we  already know that there exists at least one feasible nominal control input sequence with $\{ \bar{u}_{1 \vert k}^*, \dots, \bar{u}_{N - 1 \vert k}^* \}$ and the initial nominal state $\bar{x}_{0 \vert k + 1} = \bar{x}^*_{1 \vert k}$. Feasibility of the optimization in~\autoref{eq:rof-psf} with horizon $(N{-}1)$ again guarantees $(x_{k + 1}, u_{k + 1}) \in \set{Z}$ with $u_{k + 1} = \pi(\hat{x}_{k + 1}, \bar{x}^*_{0 \vert k + 1, N - 1}, \bar{u}^*_{0 \vert k + 1, N- 1})$ and we apply~$u_{k + 1}$. 

If the optimization with the full horizon $N$ is consecutively infeasible for the next $N - 1$ steps, then feasibility of the optimization problem with horizon of length $1$ at the previous time step guarantees that $(\bar{x}_{1 \vert k + N - 1, 1}, \bar{e}_{1 \vert k + N - 1, 1}, \bar{s}_{1 \vert k + N - 1, 1}) \in \set{S}_{\mathrm{safe}}$. Then by~\autoref{as:safe-set}, all future states and inputs guarantee $(x_{k + N + i}, u_{k + N + i}) \in \set{Z}$ for all $i \in \set{I}_{\geq 0}$. 

In case the safe set~$\set{S}_{\mathrm{f}}$ satisfies~\autoref{as:terminal-set}, we recover recursive feasibility and constraint satisfaction and require no adaptation of the horizon, see~\cite{Kohler2021-ROFMPC}. 
\end{proof}

The proof above shows that, despite the uncertainty in dynamics and output equations, the proposed \acronym~is still able to achieve constraint satisfaction at every time step. This result enables the certification of control inputs from arbitrary control policies. 

%%%%%%%%%%%%%%%%%%%%%%%%%%%%%%%%%%%%%%%%%%%%%%%%%%%%%%%%%%%%%%%%%%%%%%%%%%%%%%%%

\section{NUMERICAL EXAMPLE}
In this section, we demonstrate that~\acronym~achieves constraint satisfaction for uncertain nonlinear systems despite an arbitrary control policy. 
We apply the proposed \acronym~on the following simulated nonlinear mass-spring-damper system from~\cite{Magni2003} with an output measurement of the first element of the state, similar to~\cite{Kohler-simple-ROF}:
\begin{align}
    \begin{split}
        &\dot{x}_1 = x_2 \,,\: \dot{x}_2 = \frac{1}{M} \big ( - k_0 \exp(- x_1) \: x_1 - h_d x_2 + u \big ) \,, \\
        &y = x_1\,,
    \end{split}
\end{align}
where $M = 1$, $k_0 = 0.33$, and $h_d = 1.1$. The discrete-time model is determined based on a fourth-order Runge-Kutta method with sampling time $\Delta t = 0.25~\mathrm{s}$, which yields the discrete-time dynamics $f(x, u)$. We consider additive disturbances with $E w = \begin{bmatrix} \Delta t \cdot w_1 & \frac{\Delta t}{M} w_2 \end{bmatrix}^\intercal$ and $Fw = w_3\,,$
% \begin{equation}
%     E w = \begin{bmatrix} 
%     \Delta t \cdot w_1 & \frac{\Delta t}{M} w_2
%     \end{bmatrix}^\intercal \,, \quad Fw = w_3 \,,
% \end{equation}
where $w \in \set{W} = \{ w \in \R^3 : \lVert w \rVert_\infty \leq 0.01 \}$. We enforce the following constraints point-wise in time $\set{Z} = [-0.85, 0.85] \times [-2, 2] \times [-6, 6]$. 

%The code for the example is available here\footnote{The code will be made available upon acceptance here: https://github.com/utiasDSL/dsl\_\_projects\_\_rosf}.

We derive LMIs for the incremental Lyapunov functions assuming quadratic incremental Lyapunov functions, linear $\mathcal{K}$ and $\mathcal{K}_{\infty}$ functions, and linear feedback $\pi$, similar to~\cite{Kohler2020-terminal-ingredients}. Using a gridding approach, we can solve the resulting SDPs over the entire constraint set for the discrete-time system since we assume global properties. We run this computation offline in Matlab using YALMIP~\cite{Lofberg2004} and the solvers MOSEK~\cite{mosek} and SDPT3~\cite{sdpt3}.
This yields $\rho_{\mathrm{d}} = 0.74$,  $\rho_{\mathrm{o}} = 0.67$,  $\rho_{\mathrm{s}} = 0.78$ and $\sigma_{\mathrm{o}, w}(a) = 2.25 a$,  $\sigma_{ \mathrm{s, o}}(a) = 1.04 a$, and $ \sigma_{ \mathrm{s}, w}(a) = 2.23 a$.
% Alternatively, the incremental Lyapunov functions can be designed using the continuous-time dynamics, which often lead to simpler LMIs~\cite{Kohler2020-terminal-ingredients}.  

For simplicity, we design a safe set~$\set{S}_{\mathrm{f}}$ defined by a quadratic Lyapunov function $V_{\mathrm{f}}$, that also satisfies~\autoref{as:terminal-set}. The safe set and the associated controller are determined as in~\cite{Rawlings2017} and the size of the safe set is determined online based on the error bounds as in~\cite{Kohler-simple-ROF}. Since the constraints are polytopic, we use the constraint tightening strategy from~\cite{Kohler2021-ROFMPC}.  

The \acronym~uses a horizon $N = 40$ and the MHE uses a backward horizon $M = 10$. The uncertified control inputs at every time step $k$ are obtained from an arbitrary sinusoidal control input signal.
%The uncertified control inputs are obtained at every time step $k$ from a sinusoidal control input $\pi_{\mathcal{L}} (k) = \sin (\frac{\pi k}{N / 2})$.
We assume $\hat{x}_0 = x_0 \iff \bar{e}_0 = 0$ with the initial state $x_0 = \begin{bmatrix} 0.79 & 0.7 \end{bmatrix}^\intercal$. 
We implement the online computation~(see~\autoref{alg:rof-psf}) in Matlab using Casadi~\cite{Andersson2019} and solve the MPC and MHE using IPOPT~\cite{ipopt2006}, with the MHE optimization limited to $1\mathrm{e}3$ iterations. 

The closed-loop behavior of the proposed \acronym~is shown in~\autoref{fig:closed-loop}. This highlights the successful certification of an arbitrary control policy, which would otherwise lead to constraint violations. The proposed \acronym~is able to modify the control inputs at every time step to achieve constraint satisfaction despite state disturbances, measurement uncertainties, and estimation errors. We emphasize that the safety filter allows safe operation even in proximity to a constraint boundary due to the over-approximated error bounds. 

\begin{figure*}[tb]
    \centering
    \subfloat{{\includegraphics[width=0.95\columnwidth]{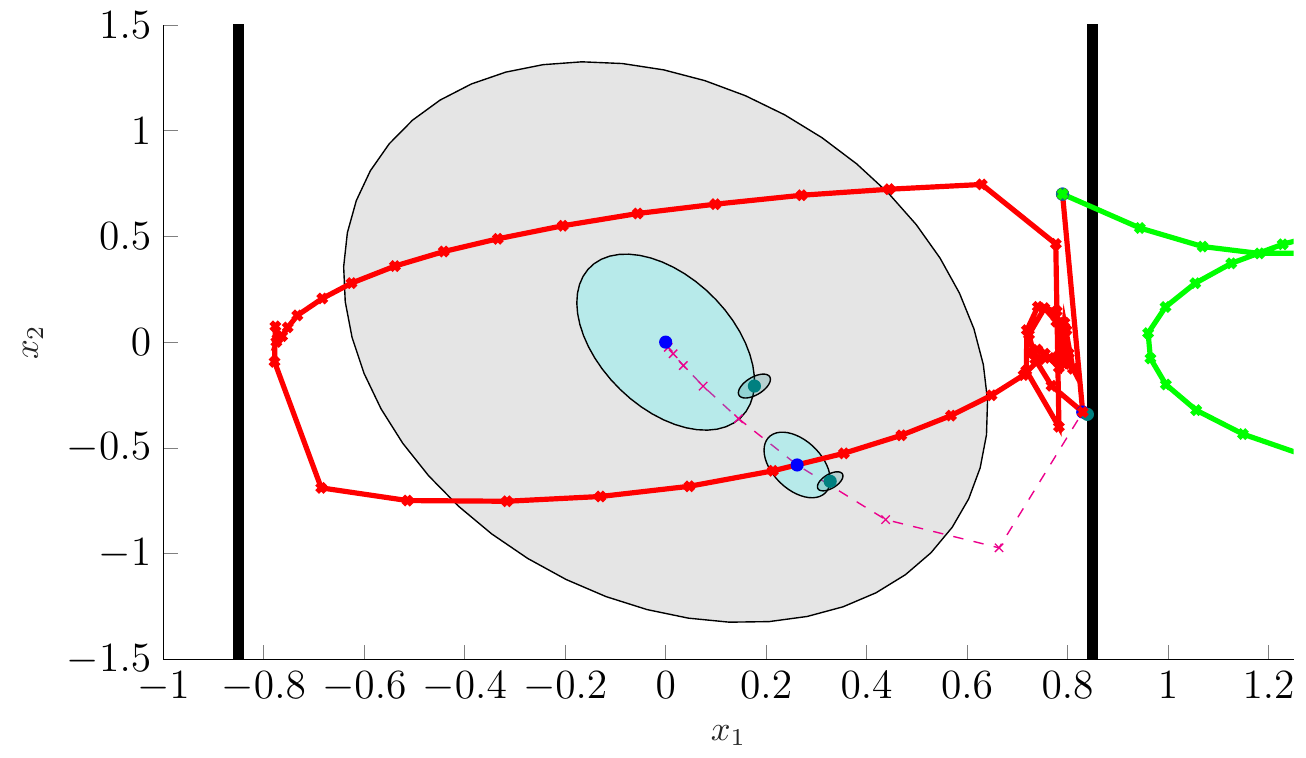} }}%
    \qquad
    \subfloat{{\includegraphics[width=0.95\columnwidth]{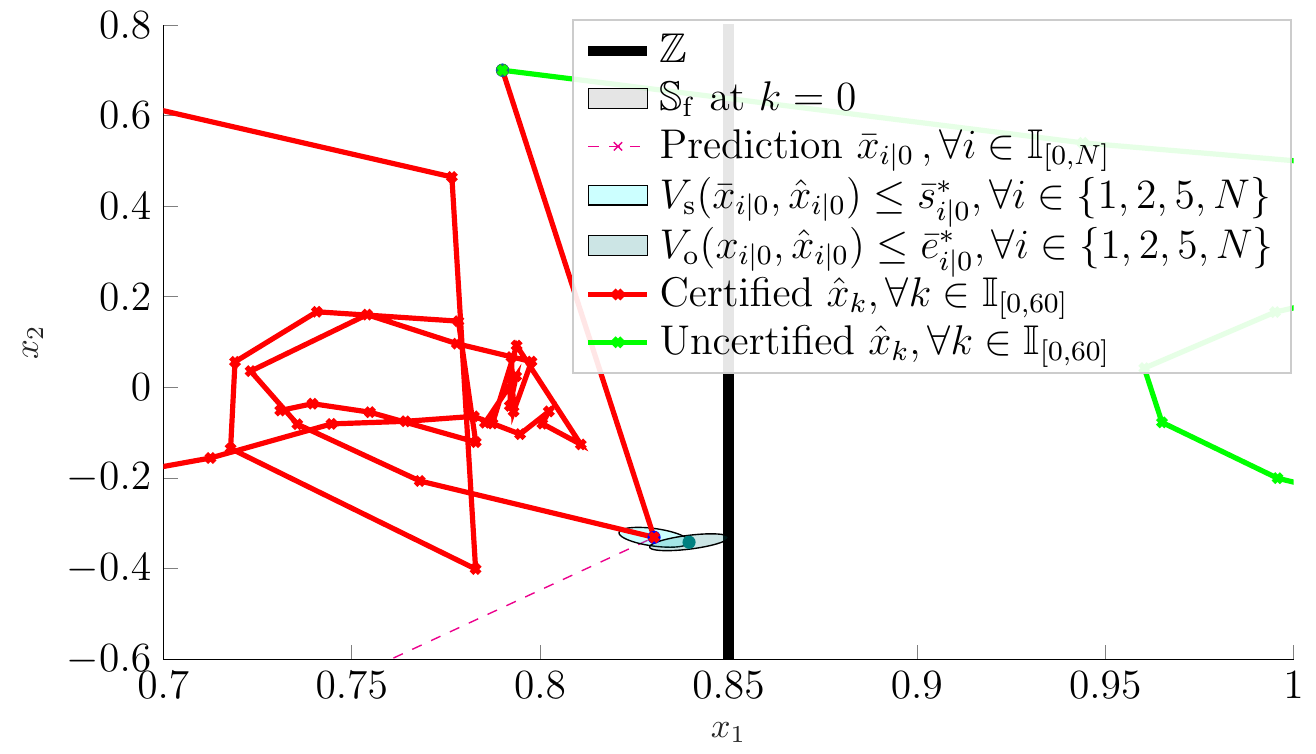} }}%
    \caption{Closed-loop behavior of the proposed \acronym~for 60 time steps. The constraint boundaries (black) and closed-loop trajectories of the estimated state for \textit{(i)} the unfiltered control input $u_k = \pi_{\mathcal{L}}(k)$ (green) and \textit{(ii)}  the certified control input as provided through the \acronym~(red) are shown. The first open-loop trajectory (magenta, dashed) starting from $x_0 = \hat{x}_0 = \begin{bmatrix} 0.79 & 0.7 \end{bmatrix}^\intercal$ and the predicted ellipsoids for the estimation error bound given by $V_{\mathrm{o}}$ (teal) and the prediction error bound given by $V_{\mathrm{s}}$ (blue) for the predicted time steps $k = \{1, 2, 5, N = 40\} $ are also displayed. The terminal set (gray) is shown for the first open-loop prediction. 
    \textbf{Left:} The figure on the left shows that the unfiltered control input immediately leads to constraint violations. In contrast, the certified control inputs from our robust predictive output-feedback safety filter achieve constraint satisfaction for all time steps. 
    \textbf{Right:} The figure on the right shows a close-up of (a). This highlights constraint satisfaction of the \acronym. Furthermore, the worst-case ellipsoid for $V_{\mathrm{o}}$ (teal) touches the constraint, which validates the constraint tightening. 
    }%
    \label{fig:closed-loop}%
\end{figure*}

%%%%%%%%%%%%%%%%%%%%%%%%%%%%%%%%%%%%%%%%%%%%%%%%%%%%%%%%%%%%%%%%%%%%%%%%%%%%%%%%
\section{CONCLUSIONS}
% TODO

In this paper, we proposed a robust predictive output-feedback  safety filter (\acronym) for certifying arbitrary control inputs applied to disturbed nonlinear systems without full-state measurements. %We showed in theory that the proposed \acronym  can guarantee constraint satisfaction in presence of uncertainties. 
The efficacy of the proposed \acronym~approach for guaranteeing constraint satisfaction under uncertainties is proved in theory and demonstrated using a mass-spring-damper system as a numerical example. 

%As future work, we plan to xx and compare the proposed approach with other robust measurement safety filters (e.g., robust measurement CBFs). 

As future work,  we plan to generalize the proposed \acronym~approach to a broader class of uncertain systems by incorporating probabilistic learning techniques and use the proposed approach to guide reinforcement learning to improve sampling efficiency.

%%%%%%%%%%%%%%%%%%%%%%%%%%%%%%%%%%%%%%%%%%%%%%%%%%%%%%%%%%%%%%%%%%%%%%%%%%%%%%%%
% \section{ACKNOWLEDGMENTS}
% The authors thank Johannes Köhler for providing insights on the synthesis of incremental Lyapunov functions. 

% The authors gratefully acknowledge the contribution of National Research Organization and reviewers' comments.

% %%%%%%%%%%%%%%%%%%%%%%%%%%%%%%%%%%%%%%%%%%%%%%%%%%%%%%%%%%%%%%%%%%%%%%%%%%%%%%%%

% References are important to the reader; therefore, each citation must be complete and correct. If at all possible, references should be commonly available publications.

% \begin{thebibliography}{99}

% \bibitem{c1}
% J.G.F. Francis, The QR Transformation I, {\it Comput. J.}, vol. 4, 1961, pp 265-271.

% \bibitem{c2}
% H. Kwakernaak and R. Sivan, {\it Modern Signals and Systems}, Prentice Hall, Englewood Cliffs, NJ; 1991.

% \bibitem{c3}
% D. Boley and R. Maier, "A Parallel QR Algorithm for the Non-Symmetric Eigenvalue Algorithm", {\it in Third SIAM Conference on Applied Linear Algebra}, Madison, WI, 1988, pp. A20.

% \end{thebibliography}

% \bibliographystyle{IEEEtranS}
\bibliographystyle{IEEEtran}
\bibliography{IEEEabrv,references}

\addtolength{\textheight}{-3cm}   % This command serves to balance the column lengths
                                  % on the last page of the document manually. It shortens
                                  % the textheight of the last page by a suitable amount.
                                  % This command does not take effect until the next page
                                  % so it should come on the page before the last. Make
                                  % sure that you do not shorten the textheight too much.

\end{document}